\documentclass[11pt]{article}

\usepackage[margin=1in]{geometry}
\usepackage{CJK, hyperref, amsmath, amsthm}

\newtheorem{lemma}{Lemma}
\newtheorem{theorem}{Theorem}
\newtheorem{corollary}{Corollary}

\theoremstyle{definition}
\newtheorem{definition}{Definition}

\DeclareMathOperator{\poly}{poly}
\DeclareMathOperator{\Span}{span}
\DeclareMathOperator{\tr}{tr}

\begin{document}

\begin{CJK*}{UTF8}{}

\title{Dynamics of Renyi entanglement entropy in local quantum circuits with charge conservation\footnote{Dedicated to the Chinese New Year of the Pig.}}

\CJKfamily{gbsn}

\author{Yichen Huang (黄溢辰)\\
Microsoft Research AI\\
Redmond, Washington 98052, USA\\
yichen.huang@microsoft.com}

\maketitle

\end{CJK*}

\begin{abstract}

In local quantum circuits with charge conservation, we initialize the system in random product states and study the dynamics of the Renyi entanglement entropy $R_\alpha$. We rigorously prove that $R_\alpha$ with Renyi index $\alpha>1$ at time $t$ is $\le O(\sqrt{t\ln t})$ if the transport of charges is diffusive. Very recent numerical results of Rakovszky et al. show that this upper bound is saturated (up to the sub-logarithmic correction) in random local quantum circuits with charge conservation.

\end{abstract}

\section{Introduction}

Entanglement, a concept of quantum information theory, has been widely used in condensed matter and statistical physics to provide insights beyond those obtained via ``conventional'' quantities. For example, in a non-integrable system the growth of the von Neumann entanglement entropy reveals the ballistic light cone set by the Lieb-Robinson bound \cite{LR72}, while the energy transport is diffusive \cite{KH13}.

The von Neumann entanglement entropy is the standard entanglement measure for pure states. However, it is also instructive to study the Renyi entanglement entropy, which reflects the entanglement spectrum \cite{LH08} and is easier to measure experimentally \cite{AD12, DPSZ12, IMP+15}. It is known that the von Neumann and Renyi entanglement entropies may behave differently in some cases: from the scaling of eigenstate entanglement \cite{DLL18, PGG18, LG17} to describing the simulability of quantum many-body states \cite{VC06, SWVC08, Hua15}.

Quantum circuits are not only a model of quantum computation \cite{Deu89, Yao93, NC10}, but also useful for the study of quantum many-body systems \cite{Vid08, CGW10, HC15}. Since the dynamics of local Hamiltonians can be simulated by local quantum circuits \cite{Llo96}, one may gain insights into the former from the latter. In particular, some recent works \cite{NRVH17, vRPS18, NVH18} studied random local quantum circuits, which are minimal models of quantum chaotic dynamics. Furthermore, the evolution governed by time-independent chaotic local Hamiltonians preserves energy. Such evolution is more faithfully described by random local quantum circuits with conserved quantities, in which the diffusive transport is easily observed \cite{KVH18, RPv18}.

In (not necessarily random) local quantum circuits with charge conservation, we initialize the system in random product states and study the dynamics of the Renyi entanglement entropy $R_\alpha$. Perhaps surprisingly, we rigorously prove that $R_\alpha$ with Renyi index $\alpha>1$ at time $t$ is $\le O(\sqrt{t\ln t})$ if the transport of charges is diffusive. It is straightforward to extend this result to cases where the transport is sub- or super-diffusive. Indeed, the proof explicitly shows that the growth of $R_\alpha$ with $\alpha>1$ is a probe of transport. This is in contrast to the linear (in $t$) growth of the von Neumann entanglement entropy.

We conjecture that the upper bound $O(\sqrt{t\ln t})$ on the Renyi entanglement entropy $R_\alpha$ with $\alpha>1$ holds more generally for the dynamics of time-independent local Hamiltonians with diffusive energy transport. It is known to fail \cite{FC08} in the integrable $XY$ chain, whose transport is ballistic.

\section{Preliminaries}

We start with some basic definitions. We use the natural logarithm throughout this paper.

\begin{definition} [entanglement entropy]
The Renyi entanglement entropy $R_\alpha$ with index $\alpha\in(0,1)\cup(1,+\infty)$ of a bipartite pure state $\rho_{AB}=|\psi\rangle\langle\psi|$ is defined as
\begin{equation}
R_\alpha(\rho_A)=\frac{1}{1-\alpha}\ln\tr\rho_A^\alpha=\frac{1}{1-\alpha}\ln\sum_{i\ge1}\Lambda_i^\alpha,
\end{equation}
where $\Lambda_1\ge\Lambda_2\ge\cdots\ge0$ with $\sum_{i\ge1}\Lambda_i=1$ are the eigenvalues (in descending order) of the reduced density matrix $\rho_A=\tr_B\rho_{AB}$. The min-entropy is defined as
\begin{equation}
R_\infty(\rho_A):=\lim_{\alpha\to+\infty}R_\alpha(\rho_A)=-\ln\Lambda_1.
\end{equation}
Note that the von Neumann entanglement entropy is given by
\begin{equation}
\lim_{\alpha\to1}R_\alpha(\rho_A)=-\tr(\rho_A\ln\rho_A).
\end{equation}
\end{definition}

\begin{lemma}
For $\alpha>1$, we have
\begin{equation}
R_\infty(\rho_A)\le R_\alpha(\rho_A)\le\frac{\alpha}{\alpha-1}R_\infty(\rho_A).
\end{equation}
\end{lemma}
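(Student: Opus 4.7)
The plan is to prove both inequalities by comparing the sum $\sum_{i\ge1}\Lambda_i^\alpha$ to appropriate powers of the largest eigenvalue $\Lambda_1$, using the definitions directly. The only subtlety is that $1-\alpha<0$, so whenever I exponentiate/take logs and multiply by $\frac{1}{1-\alpha}$ I must remember to flip inequality signs.

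For the right-hand inequality $R_\alpha(\rho_A)\le\frac{\alpha}{\alpha-1}R_\infty(\rho_A)$, I would first unpack the definitions: this is equivalent to $\frac{1}{1-\alpha}\ln\sum_{i\ge1}\Lambda_i^\alpha\le\frac{\alpha}{1-\alpha}\ln\Lambda_1$, and after multiplying by $1-\alpha<0$ it becomes $\sum_{i\ge1}\Lambda_i^\alpha\ge\Lambda_1^\alpha$. This is immediate because $\Lambda_1^\alpha$ is one of the nonnegative summands.

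For the left-hand inequality $R_\infty(\rho_A)\le R_\alpha(\rho_A)$, the same kind of unpacking reduces the claim to $\sum_{i\ge1}\Lambda_i^\alpha\le\Lambda_1^{\alpha-1}$. To get this, I would factor each term as $\Lambda_i^\alpha=\Lambda_i\cdot\Lambda_i^{\alpha-1}$; since $\alpha-1>0$ and $\Lambda_i\le\Lambda_1$, the bound $\Lambda_i^{\alpha-1}\le\Lambda_1^{\alpha-1}$ holds termwise. Summing and using the normalization $\sum_{i\ge1}\Lambda_i=1$ yields $\sum_{i\ge1}\Lambda_i^\alpha\le\Lambda_1^{\alpha-1}\sum_{i\ge1}\Lambda_i=\Lambda_1^{\alpha-1}$, which is exactly what is needed.

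There is really no hard step; the whole argument is elementary. The only point where I would slow down is the sign bookkeeping when converting statements about $R_\alpha$ and $R_\infty$ into statements about $\sum_{i\ge1}\Lambda_i^\alpha$ versus powers of $\Lambda_1$, so I would write the equivalent inequalities out explicitly before invoking the trivial comparisons.
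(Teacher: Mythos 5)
Your proposal is correct, and the sign bookkeeping is handled properly in both halves. For the right-hand inequality your argument is essentially identical to the paper's: after multiplying by $1-\alpha<0$ it reduces to $\sum_{i\ge1}\Lambda_i^\alpha\ge\Lambda_1^\alpha$, which is exactly the paper's one-line computation $\frac{1}{1-\alpha}\ln\sum_{i\ge1}\Lambda_i^\alpha\le\frac{1}{1-\alpha}\ln\Lambda_1^\alpha$. Where you differ is the left-hand inequality $R_\infty(\rho_A)\le R_\alpha(\rho_A)$: the paper simply invokes the known fact that $R_\alpha$ is monotonically non-increasing in $\alpha$ (which is why $R_\infty$ is called the min-entropy) and does not prove it, whereas you give a self-contained elementary argument via the termwise bound $\Lambda_i^\alpha=\Lambda_i\cdot\Lambda_i^{\alpha-1}\le\Lambda_i\,\Lambda_1^{\alpha-1}$ and normalization, yielding $\sum_{i\ge1}\Lambda_i^\alpha\le\Lambda_1^{\alpha-1}$. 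Your route buys self-containedness: it proves exactly the comparison with $R_\infty$ that is needed, without appealing to the general monotonicity statement (whose standard proof, e.g.\ via a derivative or Jensen-type argument, is more machinery than this lemma requires). The paper's route is shorter on the page and situates the inequality within a standard, more general fact about the R\'enyi family, but leaves that fact as an external ingredient. Either way the lemma stands, and your version would serve as a complete replacement proof.
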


\begin{proof}
The first inequality is a consequence of fact that $R_\alpha$ is monotonically non-increasing with respect to $\alpha$ (this is why $R_\infty$ is called the min-entropy). The second inequality follows from
\begin{equation}
R_\alpha(\rho_A)=\frac{1}{1-\alpha}\ln\sum_{i\ge1}\Lambda_i^\alpha\le\frac{1}{1-\alpha}\ln\Lambda_1^\alpha=\frac{\alpha}{\alpha-1}R_\infty(\rho_A).
\end{equation}
\end{proof}

\begin{definition} [local quantum circuit with charge conservation]
Consider a chain of $2n$ spin-$1/2$'s. Let the time-evolution operator be
\begin{equation}
U(t,0)=U(t,t-1)U(t-1,t-2)\cdots U(1,0),
\end{equation}
where $t$ is a positive integer. Each layer of the circuit consists of two sub-layers of local unitaries:
\begin{equation} \label{lu}
U(t,t-1)=\prod_{i=1}^{n-1} U_t^{2i,2i+1}\prod_{i=1}^n U_t^{2i-1,2i}.
\end{equation}
Each unitary $U_t^{i,i+1}$ acts on two neighboring spins $i,i+1$ and is block diagonal in the $\{|00\rangle,|01\rangle,|10\rangle,|11\rangle\}$ basis:
\begin{equation}
U_t^{i,i+1}=
\begin{pmatrix}
* & 0 & 0 & 0\\
0 & * & * & 0\\
0 & * & * & 0\\
0 & 0 & 0 & *
\end{pmatrix},
\end{equation}
i.e., $U_t^{i,i+1}$ is the direct sum of a phase factor, a unitary matrix of order $2$, and another phase factor. It should be clear that every $U_t^{i,i+1}$ and hence $U(t,0)$ preserve the total charge $\sum_{i=1}^{2n}\sigma_z^i$.
\end{definition}

Of special interest is the so-called random local quantum circuit with charge conservation, where each $U_t^{i,i+1}$ is the direct sum of a random phase factor, a Haar-random unitary matrix of order $2$, and another random phase factor. In this model, it is straightforward to prove that the transport of charges is diffusive \cite{RPv18}, i.e., the evolution of the distribution of charges $\{\langle\sigma_z^i\rangle\}_{i=1}^{2n}$ is (approximately) described by the diffusion equation. We emphasize that the main result of this paper only assumes diffusive transport and does not require randomness in $U_t^{i,i+1}$ or $U(t,0)$.

\section{Results}

We are ready to state and prove the main result of this paper.

\begin{theorem} \label{thm}
Consider the spin chain as a bipartite quantum system $A\otimes B$. Subsystem $A$ consists of spins $1,2,\ldots,n$, i.e., we study the entanglement across the middle cut. Initialize the system in a random product state $|\psi_{\rm init}\rangle$ in the $\sigma_x$ basis, i.e., each spin is in either $|+\rangle:=\frac{|0\rangle+|1\rangle}{\sqrt2}$ or $|-\rangle:=\frac{|0\rangle-|1\rangle}{\sqrt2}$ with equal probability. Let $\alpha>1$ and $\rho_A:=\tr_B(U(t,0)|\psi_{\rm init}\rangle\langle\psi_{\rm init}|U^\dag(t,0))$ be the reduced density matrix of subsystem $A$ at time $t$. If the transport of charges under the dynamics $U(t,0)$ is diffusive, then
\begin{equation} \label{maineq}
R_\alpha(\rho_A)\le\frac{\alpha}{\alpha-1}O(\sqrt{t\ln t})
\end{equation}
holds with probability $\ge1-1/p(t)$, where $p$ is a polynomial of arbitrarily high degree.
\end{theorem}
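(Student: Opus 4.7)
The plan is to apply Lemma~1 first, reducing the bound on $R_\alpha$ to a bound on the min-entropy: since $R_\alpha(\rho_A)\le \frac{\alpha}{\alpha-1}R_\infty(\rho_A)$, it suffices to show $R_\infty(\rho_A)\le O(\sqrt{t\ln t})$ with probability $\ge 1-1/p(t)$. Equivalently, I must exhibit a unit vector $|\phi\rangle_A$ for which $\langle\phi|\rho_A|\phi\rangle\ge \exp(-O(\sqrt{t\ln t}))$, since $\Lambda_1(\rho_A)\ge\langle\phi|\rho_A|\phi\rangle$.

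I would build the test state from the initial state itself. Writing $|\psi_{\rm init}\rangle=|\alpha\rangle_A\otimes|\beta\rangle_B$, I set $|\phi\rangle_A\propto \Pi^A_{q^*}|\alpha\rangle_A$ for a well-chosen charge $q^*$, where $\Pi^A_{q^*}$ projects onto the $Q_A=q^*$ eigenspace of $A$. The advantage of a $Q_A$-eigenstate is that the matrix element decomposes over total-charge sectors of the initial state: with $|\psi_{\rm init}\rangle=\sum_Q c_Q|\psi_{\rm init}^{(Q)}\rangle$,
\[
\langle\phi|\rho_A(t)|\phi\rangle = \sum_Q |c_Q|^2\,\langle\phi|\rho_A^{(Q)}(t)|\phi\rangle,
\]
because $(\langle\phi|_A\otimes I_B)|\psi^{(Q)}(t)\rangle$ lives in the charge-$(Q-q^*)$ sector of $B$ and different $Q$ give orthogonal contributions, so the cross terms vanish exactly. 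Each $\rho_A^{(Q)}$ is block diagonal in $Q_A$, so only the $Q_A=q^*$ block contributes: $\langle\phi|\rho_A^{(Q)}|\phi\rangle = p_{q^*|Q}(t)\,\langle\phi|\tilde\rho_A^{(Q,q^*)}|\phi\rangle$, where $\tilde\rho_A^{(Q,q^*)}$ is the normalized block.

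The diffusive transport hypothesis enters in two places. First, it ensures that the marginal $p_{q_A}(t)$ on $A$ stays close to its initial (Gaussian) shape, drifting by at most $O(\sqrt t)$, so that for $q^*$ in the typical window $p_{q^*}(t)=\sum_Q|c_Q|^2p_{q^*|Q}(t)$ is $\Omega(1/\sqrt n)$, contributing only $O(\ln n)=O(\ln t)$ to the min-entropy when $n=\poly(t)$. Second and more importantly, it should confine the evolved state $\tilde\rho_A^{(Q,q^*)}(t)$ inside each charge block to an \emph{effective} subspace of dimension $\exp(O(\sqrt{t\ln t}))$ containing $|\phi\rangle_A$. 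The heuristic: only $O(\sqrt{t\ln t})$ charge quanta can have crossed the $A$-$B$ boundary in time $t$, so configurational uncertainty on $A$ lives on rearrangements of that many ``boundary particles''. The extra $\sqrt{\ln t}$ factor is a Gaussian-tail allowance that yields the polynomial failure probability via a union bound.

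The main obstacle is the effective-dimension claim itself. Within each $(Q,q^*)$ block the Hilbert space is exponentially large in $n$, and ballistic spreading of non-conserved operators would a priori populate it; what should rescue the argument is that $\langle\phi|\tilde\rho_A^{(Q,q^*)}|\phi\rangle$ pairs $\tilde\rho_A^{(Q,q^*)}$ only against the very structured vector $|\phi\rangle_A$ built from the initial state, so the ballistic modes partially cancel after tracing over $B$ and only the slow (charge) mode survives. Converting the diffusive-transport hypothesis into such a quantitative lower bound on $\langle\phi|\tilde\rho_A^{(Q,q^*)}|\phi\rangle$ is the core technical step. Finally, a McDiarmid-type concentration over the random $\pm$ signs (exploiting the bounded-difference property of $\epsilon\mapsto\langle\phi(\epsilon)|\rho_A(\epsilon)|\phi(\epsilon)\rangle$ on the hypercube $\{\pm1\}^{2n}$, since flipping a single $\epsilon_i$ perturbs the matrix element by a controlled amount) should promote the in-expectation estimate to the claimed high-probability bound $\ge 1-1/p(t)$.
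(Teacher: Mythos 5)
Your first step (reducing to the min-entropy via Lemma~1, then lower bounding $\Lambda_1$ by a matrix element against a well-chosen test vector) matches the paper's strategy. But the core of your argument --- the ``effective-dimension claim'' that diffusion confines $\tilde\rho_A^{(Q,q^*)}(t)$ to a subspace of dimension $\exp(O(\sqrt{t\ln t}))$ containing the charge-projected initial state $|\phi\rangle_A$ --- is not just unproven, it is false as stated, and you have identified exactly the point where the proposal breaks. Within a fixed charge sector the circuit is still scrambling: non-conserved degrees of freedom spread ballistically, and the von Neumann entropy grows linearly in $t$, which already forces the effective support of $\rho_A$ to have dimension $e^{\Omega(t)}$ (until saturation), not $e^{O(\sqrt{t\ln t})}$. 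For the same reason the matrix element $\langle\phi|\tilde\rho_A^{(Q,q^*)}|\phi\rangle$ against a fixed vector built from the initial configuration is generically $e^{-\Theta(t)}$; the hoped-for ``cancellation of ballistic modes after tracing over $B$'' does not occur. The smallness of $R_{\alpha>1}$ is not a statement about the bulk of the Schmidt spectrum being confined --- it is the statement that a \emph{single} anomalously large Schmidt coefficient exists, which dominates $\sum_i\Lambda_i^\alpha$ for $\alpha>1$ while leaving the von Neumann entropy free to grow linearly.

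The paper produces that large coefficient by a different construction, which is the idea your proposal is missing. Decompose $|\psi_{\rm init}\rangle$ using the component in which the $2m$ spins nearest the cut are all in $|0\rangle$: this component has amplitude exactly $2^{-m}$ (each $|\pm\rangle$ has overlap $2^{-1/2}$ with $|0\rangle$). On that component, diffusive transport keeps the two spins adjacent to the cut in $|00\rangle$ up to error $e^{-\Omega(m^2/t)}$ per layer, so the block-diagonal gate straddling the cut acts merely as a phase; replacing it by that phase defines a modified circuit $V(t,0)$ that couples $A$ and $B$ through no gate at all, hence $V(t,0)|\psi_0\rangle$ is a product state across the cut and $\|U(t,0)|\psi_0\rangle-V(t,0)|\psi_0\rangle\|\le te^{-\Omega(m^2/t)}$. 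By the Eckart--Young bound (Lemma~2 in the paper), the overlap of $U(t,0)|\psi_{\rm init}\rangle$ with this rank-one state forces $\lambda_1\ge 2^{-m}(1-te^{-\Omega(m^2/t)}p(t))$, and choosing $m=O(\sqrt{t\ln t})$ gives $R_\infty\le O(\sqrt{t\ln t})$. Note also that the paper gets the probability $1-1/p(t)$ by a simple Markov argument over the $2^{2m}$ mutually orthogonal choices of the ``in'' spins, with the ``out'' spins fixed --- no concentration inequality of McDiarmid type is needed, and it is not clear your bounded-difference argument would give useful constants, since flipping one $\epsilon_i$ near the cut can change the matrix element by an amount comparable to the quantity being estimated. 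To repair your proposal you would need to abandon the effective-dimension route and instead isolate the charge-frozen component near the cut as above.
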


\begin{proof}
We divide the spin chain into two parts. One of them, labeled by ``in,'' consists of $2m$ spins (with indices $n-m+1,n-m+2,\ldots,n+m$) near the cut, where $m$ is some positive integer to be determined later. The other part, labeled by ``out,'' is the rest of the system. The initial state can be factored into
\begin{equation}
|\psi_{\rm init}\rangle=|\psi_{\rm init}\rangle_{\rm in}\otimes|\psi_{\rm init}\rangle_{\rm out},
\end{equation}
where $|\psi_{\rm init}\rangle_{\rm in}$ and $|\psi_{\rm init}\rangle_{\rm out}$ are random product states in the ``in'' and ``out'' parts of the system, respectively. Define
\begin{equation}
|\psi_0\rangle=|0\rangle^{\otimes 2m}_{\rm in}\otimes|\psi_{\rm init}\rangle_{\rm out}
\end{equation}
so that $|\langle\psi_0|\psi_{\rm init}\rangle|=2^{-m}$. Since $U(t,0)$ is unitary, we have
\begin{equation}
|\langle U(t,0)\psi_0,U(t,0)\psi_{\rm init}\rangle|=2^{-m}.
\end{equation}
The left-hand side of this equation is the absolute value of the inner product between $U(t,0)|\psi_0\rangle$ and $U(t,0)|\psi_{\rm init}\rangle$. Occasionally we do not use the standard Dirac notation because it is cumbersome.

Let $Z$ with $|Z|=2^{2n-2}$ be the set of all computational basis states (i.e., product states in the $\sigma_z$ basis) obeying the constraint that spins $n$ and $n+1$ are in the state $|00\rangle$. Let $P$ be the projection onto the subspace $\Span Z$. The state $|\psi_0\rangle$ has an extended region of $|0\rangle$'s in the middle of the chain. Since the transport of charges is described by the diffusion equation, we have
\begin{equation} \label{err}
\|(1-P)U(t,0)|\psi_0\rangle\|\le e^{-\Omega(m^2/t)}.
\end{equation}

Assume without loss of generality that $n$ is odd. The only local unitary in $U(t,t-1)$ acting on both subsystems $A$ and $B$ is in the second product on the right-hand of Eq. (\ref{lu}). Define a modified local quantum circuit
\begin{align}
&V(t,0)=V(t,t-1)V(t-1,t-2)\cdots V(1,0),\nonumber\\
&V(t,t-1)=\prod_{i=1}^{n-1} U_t^{2i,2i+1}\prod_{i=1}^{(n-1)/2}U_t^{2i-1,2i}u_t^{n,n+1}\prod_{i=(n+3)/2}^nU_t^{2i-1,2i},
\end{align}
where $u_t^{n,n+1}:=\langle00|U_t^{n,n+1}|00\rangle$ is a complex number. It is easy to see that
\begin{equation}
U(t,t-1)P=V(t,t-1)P.
\end{equation}
Therefore,
\begin{align}
U(t,0)|\psi_0\rangle&=U(t,t-1)U(t-1,0)|\psi_0\rangle\approx U(t,t-1)PU(t-1,0)|\psi_0\rangle\nonumber\\
&=V(t,t-1)PU(t-1,0)|\psi_0\rangle\approx V(t,t-1)U(t-1,0)|\psi_0\rangle,
\end{align}
where each approximation step generates an additive error upper bounded by the right-hand side of (\ref{err}). Iterating this process, we have
\begin{equation}
\||\Delta_t\rangle\|\le te^{-\Omega(m^2/t)},\quad|\Delta_t\rangle:=U(t,0)|\psi_0\rangle-V(t,0)|\psi_0\rangle.
\end{equation}

Recall that both $|\psi_{\rm init}\rangle_{\rm in}$ and $|\psi_{\rm init}\rangle_{\rm out}$ are random product states in the $\sigma_x$ basis. We now fix the latter but not the former. Then, $|\psi_0\rangle$ is fixed but $|\psi_{\rm init}\rangle$ is not. Let
\begin{equation}
S=\{|+\rangle,|-\rangle\}^{\otimes 2m}_{\rm in}\otimes|\psi_{\rm init}\rangle_{\rm out}
\end{equation}
be the set of all possible initial states consistent with $|\psi_{\rm init}\rangle_{\rm out}$ so that $|S|=2^{2m}$. Since the states in $S$ are pairwise orthogonal,
\begin{equation}
\sum_{|\psi_{\rm init}\rangle\in S}|\langle\Delta_t|U(t,0)|\psi_{\rm init}\rangle|^2\le\||\Delta_t\rangle\|^2\implies\frac{1}{|S|}\sum_{|\psi_{\rm init}\rangle\in S}|\langle\Delta_t|U(t,0)|\psi_{\rm init}\rangle|\le2^{-m}\||\Delta_t\rangle\|,
\end{equation}
where we used the inequality of arithmetic and geometric means. Define a subset of $S$ as
\begin{equation}
S':=\{|\psi_{\rm init}\rangle\in S:|\langle\Delta_t|U(t,0)|\psi_{\rm init}\rangle|\le2^{-m}\||\Delta_t\rangle\|p(t)\}.
\end{equation}
Markov's inequality implies that
\begin{equation}
|S'|/|S|\ge1-1/p(t).
\end{equation}
It suffices to prove (\ref{maineq}) for all states in $S'$. To this end, we make use of

\begin{lemma} [Eckart-Young theorem \cite{EY36}] \label{EY}
Let
\begin{equation}
|\psi\rangle=\sum_{i\ge1}\lambda_i|a_i\rangle_A\otimes|b_i\rangle_B
\end{equation}
be the Schmidt decomposition of the state $|\psi\rangle$, where $\lambda_1\ge\lambda_2\ge\cdots>0$ with $\sum_{i\ge1}\lambda_i^2=1$ are the Schmidt coefficients in descending order. Any state $|\phi\rangle$ of Schmidt rank $D$ satisfies
\begin{equation}
|\langle\phi|\psi\rangle|\le|\langle\psi'|\psi\rangle|=\sqrt{\sum_{i=1}^D\lambda_i^2}
\end{equation}
where
\begin{equation}
|\psi'\rangle:=\frac{1}{\sqrt{\sum_{i=1}^D\lambda_i^2}}\sum_{i=1}^D\lambda_i|a_i\rangle_A\otimes|b_i\rangle_B.
\end{equation}
\end{lemma}

For any particular state $|\psi_{\rm init}\rangle\in S'$, we have
\begin{align} \label{overlap}
&|\langle V(t,0)\psi_0,U(t,0)\psi_{\rm init}\rangle|=|\langle U(t,0)\psi_0,U(t,0)\psi_{\rm init}\rangle-\langle \Delta_t|U(t,0)|\psi_{\rm init}\rangle|\nonumber\\
&\ge2^{-m}-|\langle\Delta_t|U(t,0)|\psi_{\rm init}\rangle|\ge2^{-m}(1-\||\Delta_t\rangle\|p(t))\ge2^{-m}(1-te^{-\Omega(m^2/t)}p(t)).
\end{align}
Let $\lambda_1$ be the largest Schmidt coefficient of $U(t,0)|\psi_{\rm init}\rangle$, and $\Lambda_1=\lambda_1^2$ be the largest eigenvalue of the reduced density matrix $\rho_A=\tr_B(U(t,0)|\psi_{\rm init}\rangle\langle\psi_{\rm init}|U^\dag(t,0))$. Since none of the local unitaries in $V(t,t-1)$ or $V(t,0)$ act on both subsystems $A$ and $B$, $V(t,0)$ does not generate any entanglement so that $V(t,0)|\psi_0\rangle$ is a product state between $A$ and $B$ (i.e., a state of Schmidt rank $1$). Combining this with (\ref{overlap}) and Lemma \ref{EY}, we have
\begin{equation}
\lambda_1\ge2^{-m}(1-te^{-\Omega(m^2/t)}p(t)).
\end{equation}
Therefore,
\begin{equation}
R_\alpha(\rho_A)\le\frac{\alpha}{\alpha-1}R_\infty(\rho_A)=-\frac{\alpha}{\alpha-1}\ln\Lambda_1=-\frac{2\alpha}{\alpha-1}\ln\lambda_1.
\end{equation}
We complete the proof by choosing $m=O(\sqrt{t\ln t})$ with a sufficiently large pre-factor hided in the Big-O notation.
\end{proof}

It is straightforward to extend Theorem \ref{thm} to cases where the transport is sub- or super-diffusive.

\begin{corollary}
In the setting of Theorem \ref{thm}, suppose that the transport of charges under $U(t,0)$ has the scaling: distance $\sim t^z$ for $0<z<1$. Then,
\begin{equation}
R_\alpha(\rho_A)\le\frac{\alpha}{\alpha-1}O(t^z\poly\ln t)
\end{equation}
holds with probability $\ge1-1/p(t)$.
\end{corollary}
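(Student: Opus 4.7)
The proof follows that of Theorem~\ref{thm} almost verbatim; the only quantitative ingredient that needs adapting is the large-deviation bound on charge leakage out of the middle region of $2m$ zeros. In the diffusive proof, the Gaussian tail of the diffusion kernel produced (\ref{err}): $\|(1-P)U(t,0)|\psi_0\rangle\|\le e^{-\Omega(m^2/t)}$. For transport with scaling distance $\sim t^z$, the natural analog, which I would take as the quantitative version of the hypothesis, is
\begin{equation}
\|(1-P)U(t,0)|\psi_0\rangle\|\le e^{-\Omega((m/t^z)^\beta)}
\end{equation}
for some constant $\beta>0$ controlling the tail of the spreading profile (for diffusion one has $z=1/2$ and $\beta=2$, reproducing (\ref{err})).

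With this single inequality in hand, every subsequent step of the proof of Theorem~\ref{thm} goes through essentially unchanged. One forms the modified circuit $V(t,0)$ as in the original proof, using $u_t^{n,n+1}:=\langle00|U_t^{n,n+1}|00\rangle$ so that $V(t,0)$ never acts across the cut; telescoping layer by layer gives $\||\Delta_t\rangle\|\le t\cdot e^{-\Omega((m/t^z)^\beta)}$. One then decomposes $|\psi_{\rm init}\rangle$ into a fixed outer part and a uniformly random inner part, averages $|\langle\Delta_t|U(t,0)|\psi_{\rm init}\rangle|$ over the $2^{2m}$ choices, applies Markov's inequality to prune $S$ to a ``good'' subset $S'$ with $|S'|/|S|\ge 1-1/p(t)$, and invokes Lemma~\ref{EY} with the Schmidt-rank-one state $V(t,0)|\psi_0\rangle$ as approximant to $U(t,0)|\psi_{\rm init}\rangle$. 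This yields $\lambda_1\ge 2^{-m}(1-t\,e^{-\Omega((m/t^z)^\beta)}p(t))$ for every $|\psi_{\rm init}\rangle\in S'$.

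The only calculation that changes is the optimization of $m$. To make the error factor $t\,e^{-\Omega((m/t^z)^\beta)}p(t)$ a constant less than $1$, one needs $(m/t^z)^\beta=\Omega(\ln(t\,p(t)))=\Omega(\ln t)$, and hence $m=O(t^z(\ln t)^{1/\beta})=O(t^z\poly\ln t)$. Substituting this choice into
\begin{equation}
R_\alpha(\rho_A)\le\frac{\alpha}{\alpha-1}R_\infty(\rho_A)=-\frac{2\alpha}{\alpha-1}\ln\lambda_1\le\frac{2\alpha\ln 2}{\alpha-1}\,m+O(1)
\end{equation}
gives the claimed bound.

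The main subtlety, and the place where the plan is most fragile, is the step from the bare scaling statement ``distance $\sim t^z$'' to the stretched-exponential tail estimate on $\|(1-P)U(t,0)|\psi_0\rangle\|$. The scaling law is a statement about typical spreading, whereas the argument needs a quantitative bound on anomalously fast spreading. In the diffusive case this is the standard sub-Gaussian estimate already used in Theorem~\ref{thm}; in the sub- or super-diffusive regimes one has to promote the hypothesis to an assumption on the tails, available in physically natural models via moment bounds or large-deviation principles for the underlying stochastic transport. Any $\beta>0$ suffices to absorb the logarithmic loss into the $\poly\ln t$ factor, which is why the corollary is stated with that slack.
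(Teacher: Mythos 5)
Your proposal is correct and is exactly the extension the paper has in mind: the paper gives no separate proof of the corollary (calling it a straightforward extension of Theorem \ref{thm}), and your argument reruns that proof verbatim with the diffusive tail $e^{-\Omega(m^2/t)}$ replaced by a tail at scale $t^z$ and the choice $m=O(t^z\poly\ln t)$. Your remark that the bare scaling ``distance $\sim t^z$'' must be read as a quantitative tail bound mirrors the same implicit assumption the paper already makes in Eq.~(\ref{err}) for the diffusive case, so there is no gap relative to the paper's own standard of rigor.
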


It is also straightforward to extend Theorem \ref{thm} to higher spatial dimensions.

\section*{Notes}

Very recently, we became aware of a relate work \cite{RPv19}, which also studied the growth of the Renyi entanglement entropy in diffusive systems. The numerical results there show that the upper bound in Theorem \ref{thm} is saturated (up to the sub-logarithmic correction) in random local quantum circuits with charge conservation.

\bibliographystyle{abbrv}
\bibliography{diffusive}

\end{document}